\theoremstyle{theorem}%
\newtheorem{theorem}{Theorem}
\newtheorem{lemma}{Lemma}
\newtheorem{proposition}{Proposition}
\newtheorem{corollary}[theorem]{Corollary}
\newtheorem{case}{Case}
\theoremstyle{definition}
\newtheorem{definition}{Definition}
\newtheorem{example}{Example}
\newtheorem{remark}[theorem]{Remark}
\theoremstyle{construction}
\newtheorem{construction}{Construction}
\newtheorem{open problem}{Open problem}
\begin{document}
	
	\title[Capacity-Achieving Codes for Noisy Insertion Channels]{Capacity-Achieving Codes for Noisy Insertion Channels}

	\author[1]{\fnm{Hengfeng} \sur{Liu}}\email{hengfengliu@163.com}
	
	\author*[2]{\fnm{Chunming} \sur{Tang}}\email{tangchunmingmath@163.com}
	
	\author[3]{\fnm{Cuiling} \sur{Fan}}\email{cuilingfan@163.com}

	\affil[1,3]{\orgdiv{School of Mathematics}, \orgname{Southwest Jiaotong University}, \city{Chengdu}, \postcode{611756},  \country{China}}
	
	\affil[2]{\orgdiv{School of Information Science and Technology}, \orgname{Southwest Jiaotong University}, \city{Chengdu}, \postcode{611756}, \country{China}}


	\abstract{
		The design of error-correcting codes capable of correcting insertions and deletions has garnered significant attention recently, largely motivated by the requirements of DNA-based data storage. Insertion is one of the most frequent errors occurring in DNA sequences, where the inserted symbol is often identical or complementary to the original, and in practical implementations, noise can further cause the inserted symbol to mutate into a random one, which causes challenges to reliable data recovery. Motivated by these error mechanisms, this paper formalizes a noisy insertion channel characterized by an arbitrary number of identical or complementary symbol insertions, alongside at most one random insertion. Specifically, the exact coding capacity of this noisy insertion channel is established. By constructing asymptotically optimal error-correcting codes, this theoretical capacity is proven to be achievable. Furthermore, an efficient decoding algorithm is presented, which uniquely recovers the transmitted codewords in linear time with respect to the length of the received sequence.}

\keywords{Error-correcting code, DNA sequences, insertions, noisy channel, coding capacity}
	\pacs[MSC Classification]{68R15, 94B25, 94B35}
	\maketitle
	\section{Introduction}

The enormous expansion of data creates serious problems for conventional data storage media \cite{Reinsel2020}. In response to these issues, DNA storage has emerged as a viable substitute for next-generation data storage thanks to recent developments in DNA synthesis and sequencing technology. DNA storage offers unparalleled advantages over traditional electronic media, including six orders of magnitude higher data density, exceptional longevity, and the ability to generate copies efficiently. The feasibility of data storage in DNA molecules in-vitro (that is, outside of living cells and organisms) was initially demonstrated through experiments in \cite{cg, gb}, and later in-vivo (that is, within living cells and organisms) \cite{sn}. Moreover, in-vivo DNA storage enables critical biological functionalities such as watermarking genetically modified organisms, tagging infectious bacteria for epidemiological studies, conducting biogenetical research, and embedding computational memory for synthetic-biology applications \cite{Jupiter}. Similar to other storage systems, the transmitted information can be distorted by the channel, resulting in errors at the receiver end. In in-vivo DNA storage systems, information is corrupted by a variety of errors during different stages of the preservation process. Typical errors include point insertions/deletions, substitutions, which also commonly occur in electronic storage and communication systems. However, some errors are specific to DNA storage, for example, duplication, a special kind of burst insertion (insertion of consecutive bits). Generally, a duplication error occurs in a DNA sequence, a potentially modified copy of a substring is generated and inserted after the original substring.  \par 

 Another type of error unique in-vivo DNA storage is the complement insertion. A DNA sequence is composed of elements from the alphabet $ \{A, C, G, T\} $,  the four chemical bases: adenine, cytosine, guanine, and thymine. The four chemical bases are partitioned into complement pairs, where $A$ and $T$ are called \emph{complements} of each other, and so are $C$ and $G$. Besides random insertions, due to mutations during the biological processes in long-term evolution \cite{mh,sr}, DNA sequences are prone to point insertion of symbols complement or identical to the original one, where the former is called \emph{complement insertion} and the latter is \textit{tandem duplication}. For example, $TACTCTACCAA\Longrightarrow TAC\underline{G}TCTA\underline{A}CCAA$  demonstrates one complement insertion and one 1-tandem duplication. These errors pose the task of designing error-correcting codes. Error-correcting codes for insertions/deletions have been extensively studied since the first investigation by Varshamov, Tenengolts, and Levenshtein in the 1960s. In 1965, Varshamov and Tenengolts \cite{1965VT} constructed the famous binary VT codes correcting asymmetric errors on the Z-channel, and Levenshtein \cite{vi} subsequently proved that the VT codes can also correct a single insertion or deletion. Over the years, a lot of code constructions against insertions have been proposed (see \cite{Helberg, lius,Gabrys,Sima, lz} and references therein), while it is still a challenging task to correct a large number of insertions. In practical DNA storage implementations, insertion errors are often accompanied by biological noise \cite{Pumpernik2008}, leading to random substitutions or insertions. 
 
 Recently, Yohananov and Schwartz \cite{ys2} studied the \emph{exact} complement insertion channel and proposed optimal codes capable of correcting any number of complement insertions. However, their channel model assumes that insertions occur without any random noise. Motivated by the research in \cite{ys2} and the presence of biochemical noise, we investigate the \emph{noisy} insertion channel, where an arbitrary number of complement insertions and 1-tandem duplications, as well as up to one random insertion, may occur. This model extends the exact complement insertion channel studied in \cite{ys2} to a more realistic setting.
 
 By the \emph{coding capacity} of a channel, we mean the asymptotic rate of optimal error-correcting codes. In this work, we establish the coding capacity, denoted by $\mathsf{cap}^{\text{noisy}}_{q}$, for the proposed noisy insertion channel. By constructing asymptotically optimal codes, we prove that for an alphabet of size $q \ge 4$, the coding capacity is exactly
 $$\mathsf{cap}^{\text{noisy}}_{q} = \log_{q}(q-2).$$
 
 The major contributions of this work are summarized as follows:
 \begin{itemize}
 	\item \textbf{New Channel Modeling with Diverse Errors:} 
 	We extend the exact complement insertion channel in \cite{ys2} to the noisy insertion channel. This new model captures the occurrence of various insertion errors induced by biological noise in DNA storage.
 	
 	\item \textbf{Coding Capacity Determination:} 
 	We determine the asymptotic coding capacity $\mathsf{cap}^{\text{noisy}}_{q}$ for this noisy channel and prove that $\mathsf{cap}^{\text{noisy}}_{q} = \log_{q}(q-2)$ for $q \ge 4$. This shows that correcting additional random insertions and 1-tandem duplications incurs no asymptotic rate penalty compared to channels with only exact complement insertions.
 	
 	\item \textbf{Code Construction:} 
 	We construct error-correcting codes for the noisy insertion channel with rates asymptotically achieving $\log_{q}(q-2)$. These codes generalize the constructions in \cite{ys2} to correct additional error types.
 	
 	\item \textbf{Linear-Time Decoding Algorithm:} 
 	We present an explicit decoding algorithm that uniquely recovers the transmitted codewords in strictly linear time $O(N)$ with respect to the length of the received sequence.
 \end{itemize} 
 
 The rest of this paper is organized as follows. Section \ref{sec:2} introduces the relevant definitions and notations. Section \ref{sec:3} presents the asymptotically optimal code construction, determines its coding capacity, and details the decoding algorithm. Section \ref{conclusion} concludes the paper and outlines potential directions for future research.

	\section{Preliminaries} \label{sec:2}
	Throughout the paper, $\mathbb{Z} _{q}$ denotes the ring of integers modulo $q$, where $q\geq 2$ is a positive integer. For $n\in \mathbb{N}$, let $\mathbb{Z}_q^n$ denote the set of all sequences of length $n$ over $\mathbb{Z}_q$. 
	The set of all sequences of finite length over $\mathbb{Z}_q$ is denoted by $\mathbb{Z}_q^{*} $ and is defined by
	\[
	\mathbb{Z}_q^{*} = \bigcup_{n \geq 1} \mathbb{Z}_q^n.
	\]
	For a set $S$, $|S|$ denotes its size, while for a sequence $x$, $|x|$ denotes its length. Recall that the quaternary alphabet $ \{A, C, G, T\} $ is partitioned into two pairs of complement elements, where $A$ and $T$ are called \emph{complements} of each other, and so are $C$ and $G$. More generally, we have the following complement rule over $\mathbb{Z} _{q}$, where the alphabet size $q$ is supposed to be even.
	\begin{definition}
		A \textit{complement} operation on $\mathbb{Z} _{q}$ is a bijective map from $\mathbb{Z} _{q} \longrightarrow \mathbb{Z} _{q}$ defined by $u \mapsto \overline{u}$ such that $\overline{u} \ne u$ and $\overline{\overline{u}}=u$. 
	\end{definition}
	Throughout this paper, for convenience, we further assume $\overline{u}=q-1-u$. We naturally extend the complement notation to strings. Specifically, if 
	$x = x_0x_1\ldots x_{n-1} \in \mathbb{Z}_q^n$, then its complement is defined as 
	$\bar{x} = \bar{x}_0 \bar{x}_1 \ldots \bar{x}_{n-1}$. Given a sequence $x\in \mathbb{Z}_{q}^{*}$, a complement insertion generates a copy $\overline{v}$ which is complement to $v$ at $(i+1)$-th position in $x$, and inserts it immediately after $v$. More precisely, a \textit{complement insertion} is a function $T^{c}_{i} : \mathbb{Z}_{q}^{*} \longrightarrow \mathbb{Z}_{q}^{*}$ defined by
		\begin{equation}
		T^{c}_{i}(x)=
		\begin{cases} 
			uv\overline{v}w & \text{if } x =uvw , |u| = i, |v| = 1 \\ 
			x & \text{if } |x| < i + 1
		\end{cases}	
		\end{equation}		
	Denote by $\mathcal{T}= \left \{T^{c}_{i} \mid i\ge0 \right \}   $  the set of all complement insertion rules. We say that $y$ is a \emph{C-descendant} of $x$ if there exist $t \geq 0$ and $T^{c}_{i_{j}}\in \mathcal{T}  $ for $1\le j \le t$, such that
	$$y=T^{c}_{i_{t}}(T^{c}_{i_{t-1}}\cdots (T^{c}_{i_{1}}(x))).$$
	We define  \textit{1-tandem duplication} rule $T^{t}_{i} : \mathbb{Z}_{q}^{*} \longrightarrow \mathbb{Z}_{q}^{*}$, as
	\begin{equation}
	T^{t}_{i}(x)=
	\begin{cases} 
		uvvw & \text{if } x =uvw , |u| = i, |v| = 1 \\ 
		x & \text{if } |x| < i + 1
	\end{cases}	
\end{equation}
		Analogously, $y$ is said to be a \emph{T-descendant} of $x$ if it is obtained by only 1-tandem duplications. Further, if $y$ is obtained through complement insertions and 1-tandem duplications, then it is called a \emph{CT-descendant} of $x$.
		
		The following example illustrates the CT-descendant of a sequence.
	\begin{example}\label{example RC}
		Consider $\mathbb{Z}_4=\{0, 1, 2, 3\}$ with $\overline{0}=3$, $\overline{1}=2$. We have
		$$x=100231020\Longrightarrow 10023\underline{\textcolor{blue}{0}}102\underline{\textcolor{blue}{1}}0\Longrightarrow y= 10023\underline{\textcolor{blue}{0}}102\underline{\textcolor{blue}{1}}\underline{\textcolor{gray}{1}}0.$$ Here $y$ is a CT-descendant of $x$ obtained by two complement insertions and one 1-tandem duplication.
	\end{example}
In noisy channels, when a complement insertion occurs, the generated copies may not be complement or identical, and they always suffer from substitution errors. Further, the inserted complement symbol can by substituted by a random symbol from the alphabet, which is equivalent to a \textit{random insertion}. In this paper, we shall limit our attention to the noisy complement insertion channel where up to one random insertion is allowed.\par 
Generalizing the concept of \emph{CT-descendant}, we also have the concept of \emph{noisy descendant}. In a noisy channel, $y$ is called a \emph{noisy descendant} of $x$ if it is obtained through complement insertions, 1-tandem duplications and up to one random insertions. Further, the \emph{C-descendant cone} (resp., \emph{T-descendant cone}) of $x$ is the set of all its \emph{C-descendants} (resp., \emph{T-descendants}), denoted by $D^{*}_{c}(x)$ (resp., $D^{*}_{t}(x)$). The set of all  CT-descendants of $x$ is denoted by $D^{*}_{ct}(x)$. For $t\ge p\ge 0$, let $D^{*(1)}_{ct}(x)$ be the set obtained from $x$ by many complement insertions, 1-tandem duplications and one random insertion. Then we define the \emph{noisy descendant cone} of $x$ to be the set obtained by many complement insertions, 1-tandem duplications and at most 1 random insertion, formally written as 
\begin{equation}\label{noisy descendant}
	D^{*(\leq 1)}_{ct}(x) = D^{*}_{ct}(x) \cup D^{*(1)}_{ct}(x).
\end{equation}
 Continuing Example \ref{example RC}, we provide the following example in the noisy channel.
	\begin{example}\label{example NRC}
	Consider $\mathbb{Z}_4=\{0, 1, 2, 3\}$ with $\overline{0}=3$, $\overline{1}=2$. Here the sequence $x$ first suffers from a complement insertion of symbol 0, and then a random insertion of symbol 0, which is followed by two 1-tandem duplications, thus  $y'\in D^{*(\leq 1)}(x)$. 
	$$x=100231020\Longrightarrow 10023\underline{\textcolor{blue}{0}}1020\Longrightarrow 10023\underline{\textcolor{blue}{0}}102\underline{\textcolor{red}{0}}0\Longrightarrow y'=1002\underline{\textcolor{gray}{2}}3\underline{\textcolor{blue}{0}}102\underline{\textcolor{red}{0}}\underline{\textcolor{gray}{0}}0.$$
\end{example}
Next, we introduce definitions relevant to error-correcting codes. We first give the standard definition of error-correcting code for \emph{exact} complement insertion channel.
\begin{definition}
	A subset $\mathcal{C} \subseteq\mathbb{Z} _{q}^{n}$ is said to be a \textit{complement insertion-correcting code} if for any $x, y\in \mathcal{C}$ with $x\ne y$, \begin{equation}\label{error ball}
		D^{*}_{c}(x)\cap D^{*}_{c}(y)=\emptyset.
	\end{equation}
	
\end{definition}
From Eq. \eqref{noisy descendant}, it is obvious that $ D_{c}^{*}(x)\subseteq D_{ct}^{*(\le 1)}(x) $ for any sequence $x\in \mathbb{Z}_q^{*}$. Similarly we have the following definition for noisy insertion-correcting code.
\begin{definition}
	A subset $\mathcal{C} \subseteq\mathbb{Z} _{q}^{n}$ is said to be a \textit{code correcting infinitely many insertions of complement or identical symbol and up to one random insertion} if for any $x, y\in \mathcal{C}$ with $x\ne y$, \begin{equation}\label{noisy error ball}
		D_{ct}^{*(\le 1)}(x)\cap D_{ct}^{*(\le 1)}(y)=\emptyset.
	\end{equation} 
\end{definition}
For evaluation of above error-correcting codes, we consider their rates, defined as follows.
	\begin{definition}
		Let $\mathcal{C} \subseteq\mathbb{Z} _{q}^{n}$ be a $t$-error-correcting code of length $n$ and size $M$. Then the \textit{rate} of $\mathcal{C}$ is defined as 
		\begin{equation}
			R(\mathcal{C}) = \frac{1}{n} \log_{q}M.
		\end{equation}
		Let $A_q(n; t)$ denote the largest size of such $t$-error-correcting codes. The \textit{coding capacity} is defined as 
		\begin{equation}\label{CodCapacity}
			\mathsf{cap}_{q}=\limsup_{n \to \infty}
			  \frac{1}{n} \log_{q} A_q(n; t).
		\end{equation}
	\end{definition}
Throughout this paper, we say an error-correcting code is \emph{optimal} if it has the maximum size (equivalently, maximum rate). As $n$ goes to infinity, if the rate of a code equals to that of optimal codes, then the code is called \emph{asymptotically optimal}. 
Finally, the \emph{coding capacity} of a channel is the asymptotic rate of optimal error-correcting code. We denote by the coding capacity of exact  complement (resp., noisy) insertion channel by $\mathsf{cap}^{\text{exact}}_{q}$ (resp., $\mathsf{cap}^{\text{noisy}}_{q}$).
\section{Codes for the Noisy Insertion Channel}\label{sec:3}
In this section, we consider the noisy insertion channel, where any number of complement insertions, 1-tandem duplications and up to one random insertion may occur.\par 

By constructing codes that asymptotically achieve the coding capacity of \emph{exact} complement channels, we prove that correcting extra errors does not lower the channel's coding capacity. According to the definition, error-correcting codes, in both the \textit{exact} complement and \textit{noisy} insertion channels, can be constructed by identifying a subset that satisfies the condition that the two different descendant cones are disjoint, corresponding to each pair of two distinct elements of the set. Note that $D_{c}^{*}(x) \subseteq D_{ct}^{*(\leq 1)}(x)$ for any sequence $x$. Hence, every code designed for the \textit{noisy} insertion channel also serves as a code for the \textit{exact} complement insertion channel. \par 

In \cite{ys2}, the authors presented a key idea for the characterization of two different C-descendant cones to be disjoint for the exact complement insertion channel (see Eq.~(\ref{error ball})), where it was referred to as \emph{signature}. For a sequence $x\in \mathbb{Z}_{q}^{*}$, its signature is defined as follows.
 \begin{definition}
 Let $a\in \mathbb{Z}_{q}$, and let $a^{\oplus}$ denote the set of all sequences starting with $a$ and followed by any finite
 length string composed of $a$ and $\overline{a}$ only. For a given sequence $x\in \mathbb{Z}_{q}^{*}$, assume $x\in a_{0}^{\oplus}a_{1}^{\oplus}\ldots a_{l-1}^{\oplus}$, where $a_{i}\notin \left \{ a_{i+1},\overline{a_{i+1}} \right \} $ for $0\leq i \leq l-1$. Then we define the \emph{signature} of $x$ to be $$\sigma (x) = a_{0}a_{1}\ldots a_{l-1}.$$ 
 
 Moreover, the sequence $x$ is called \emph{irreducible} if $\sigma (x)=x$. We denote the set of all irreducible sequences of length $n$ by  $\mathrm{Irr}(n)$.
 \end{definition}
 \begin{example}
  Consider $\mathbb{Z}_4=\{0, 1, 2, 3\}$ with $\overline{0}=3$, $\overline{1}=2$. Let $x=0312130$ and  $y=1320102$ be two sequences over $\mathbb{Z}_4$. We have $\sigma (x)=013$ and $\sigma (y)=y$.  Here $y$ is irreducible.
 \end{example}
 
The following result provides a necessary and sufficient condition for a finite sequence over $\mathbb{Z}_q$ to be irreducible.
 \begin{theorem}
 	Let $x=x_0x_1\dots x_{n-1}\in \mathbb{Z}_q^{*}$ be an arbitrary sequence. Then $x$ is irreducible if and only if  $x_i\notin \{x_{i+1}, \overline{x_{i+1}} \} $ for all $i=0,1, \dots, n-1$.
 \end{theorem}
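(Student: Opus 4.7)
The plan is to reduce the statement to a routine unpacking of the signature block decomposition. By the definition of signature, every sequence $x \in \mathbb{Z}_q^n$ admits a unique block decomposition $x = w_0 w_1 \cdots w_{l-1}$ where each block $w_j$ lies in $a_j^{\oplus}$ (i.e., begins with $a_j$ and contains only $a_j$ and $\overline{a_j}$) and the leading symbols satisfy $a_j \notin \{a_{j+1}, \overline{a_{j+1}}\}$. One should note that this decomposition is obtained greedily: $a_0 = x_0$, the block $w_0$ is the maximal prefix of $x$ built from $\{a_0, \overline{a_0}\}$, and then one repeats starting from the first position whose symbol falls outside $\{a_0, \overline{a_0}\}$. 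With this in hand, $\sigma(x) = a_0 a_1 \cdots a_{l-1}$ has length $l \leq n$, with equality precisely when every block $w_j$ has length exactly $1$.

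For the forward direction, I would assume $x$ is irreducible, so $\sigma(x) = x$ forces $l = n$ and hence $|w_j| = 1$ for every $j$. Consequently $a_j = x_j$ for each $j$, and the defining block-boundary condition $a_j \notin \{a_{j+1}, \overline{a_{j+1}}\}$ translates directly into $x_i \notin \{x_{i+1}, \overline{x_{i+1}}\}$ for all valid $i$.

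For the converse, assume the pairwise condition $x_i \notin \{x_{i+1}, \overline{x_{i+1}}\}$ holds. Suppose toward contradiction that some block $w_j$ has length at least $2$; then two consecutive symbols $x_i, x_{i+1}$ lie inside the same $a_j^{\oplus}$, which means $x_i, x_{i+1} \in \{a_j, \overline{a_j}\}$. Since $\{a_j, \overline{a_j}\} = \{x_i, \overline{x_i}\}$, this yields $x_{i+1} \in \{x_i, \overline{x_i}\}$, contradicting the hypothesis. Hence every block has length exactly $1$, so $\sigma(x) = x_0 x_1 \cdots x_{n-1} = x$, establishing irreducibility.

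The only genuine subtlety, which I expect to be the main obstacle in writing the argument rigorously rather than its mathematical content, is justifying that the signature decomposition is well-defined and unique; the definition in the paper implicitly relies on this. Once that is acknowledged (e.g., via the greedy parsing described above), both implications follow by a direct translation of the block-level conditions into the symbol-level conditions.
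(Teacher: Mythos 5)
Your proof is correct and takes essentially the same approach as the paper, whose entire proof is the one-line remark that the claim follows directly from the definition; your write-up is just a careful expansion of that via the greedy block decomposition of $x$ into $a_0^{\oplus}a_1^{\oplus}\cdots a_{l-1}^{\oplus}$. The only point worth flagging is cosmetic: the index range in the statement should read $i=0,\dots,n-2$, since $x_{i+1}$ is undefined for $i=n-1$.
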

 \begin{proof}
 	Proof follows directly from the definition.
 \end{proof}
 
 The following proposition demonstrates that insertions of complement or identical symbols do not alter the signature of a sequence.
  \begin{proposition}\label{do not alter}
 	Let $x, y\in \mathbb{Z}_{q}^{*}$. If $y\in D_{ct}^{*}(x)$, then $\sigma (x)=\sigma (y)$.
 \end{proposition}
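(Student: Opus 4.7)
The plan is to reduce the statement to the single-step case and then show directly that neither a complement insertion nor a 1-tandem duplication can alter the signature. By definition of $D^{*}_{ct}(x)$, any CT-descendant $y$ is obtained from $x$ by a finite composition of operations $T^{c}_{i}$ and $T^{t}_{i}$, so a straightforward induction on the number of composed operations lets me assume $y$ arises from $x$ in a single step, namely $y = T^{c}_{i}(x)$ or $y = T^{t}_{i}(x)$ with $|x| \geq i+1$.

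Next I would exploit the block decomposition guaranteed by the definition of $\sigma$. Write $x \in a_0^{\oplus} a_1^{\oplus} \cdots a_{l-1}^{\oplus}$, so every position of $x$ belongs to a unique block, and every symbol in block $j$ lies in $\{a_j, \overline{a_j}\}$. Let $v = x_i$ and suppose $v$ lies in block $j$. The inserted symbol $w$ is $v$ (in the tandem case) or $\overline{v}$ (in the complement case), so in either case $w \in \{v, \overline{v}\} = \{a_j, \overline{a_j}\}$. The key observation is that $w$ is inserted immediately after $v$, i.e., in a position that is itself interior or at the right boundary of block $j$.

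It then remains to check that this insertion extends block $j$ by one symbol without splitting or merging any blocks. The symbol preceding $w$ in $y$ is $v \in \{a_j, \overline{a_j}\}$. The symbol following $w$, if it exists, either (i) belongs to the original block $j$, in which case it already lies in $\{a_j, \overline{a_j}\}$, or (ii) is the leading symbol $a_{j+1}$ of block $j+1$ in the case where $v$ was the last symbol of block $j$; in subcase (ii) the defining property $a_{j+1} \notin \{a_j, \overline{a_j}\} = \{w, \overline{w}\}$ shows that the boundary between the (now lengthened) block $j$ and block $j+1$ is preserved. Hence $y \in a_0^{\oplus} a_1^{\oplus} \cdots a_{l-1}^{\oplus}$ with the same sequence of leaders, so $\sigma(y) = a_0 a_1 \cdots a_{l-1} = \sigma(x)$.

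I do not foresee a serious conceptual obstacle; the only care needed is the boundary bookkeeping in subcase (ii), where one must confirm that the inserted symbol neither creates a spurious new block (ruled out because $w$ and the preceding $v$ both lie in $\{a_j, \overline{a_j}\}$) nor merges block $j$ with block $j+1$ (ruled out by $a_{j+1} \notin \{a_j, \overline{a_j}\}$). Given how cleanly the definition of $a^{\oplus}$ interacts with insertions drawn from $\{a, \overline{a}\}$, the verification is essentially mechanical once the block picture is set up.
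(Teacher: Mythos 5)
Your proposal is correct and follows essentially the same route as the paper's proof: decompose $x$ into the blocks $a_0^{\oplus}a_1^{\oplus}\cdots a_{l-1}^{\oplus}$, reduce to a single insertion, and observe that the inserted symbol merely extends the block containing $v$ without changing its leading letter. The only difference is that you spell out the boundary bookkeeping (no block splitting or merging) that the paper leaves implicit.
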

\begin{proof}
	Let $x \in \mathbb{Z}_q^*$, and let its signature be $\sigma(x) = a_0 a_1 \dots a_{l-1}$ where $a_{i+1} \notin \{a_i, \overline{a_i}\}$. Further, we partition $x$ into $l$ blocks and denote $x = B_0 B_1 \dots B_{l-1}$, where $B_i \in a_i^\oplus$ is a substring of $x$ starting with $a_i$. Let $y' \in D_{ct}^*(x)$. Without loss of generality, assume that $y'$ is a sequence obtained from $x$ via a single complement insertion or a 1-tandem duplication. Suppose this insertion occurs in the block $B_i$, then the effect of the insertion is merely extending the substring $B_i$ by one letter, while its first letter $a_i$ remains unchanged. Thus, $\sigma(x) = \sigma(y')$. By applying this argument repeatedly, we have $\sigma(x) = \sigma(y)$ for any $y \in D_{ct}^*(x)$.
\end{proof}
 In \cite[Theorem 5]{ys2}, the authors provided a characterization for two distinct sequences to have a common C-descendant cone. Now, we extend this result for CT-descendant cone. The following theorem presents a characterization for two distinct sequences over $\mathbb{Z}_q$ to have a common CT-descendant cone.
 \begin{theorem}\label{signature characterizes}
Let $x\ne y\in \mathbb{Z}_{q}^{*}$. Then  $ D_{ct}^{*}(x) \cap D_{ct}^{*}(y) \ne \emptyset $ if and only if $\sigma (x)= \sigma (y)$.
 \end{theorem}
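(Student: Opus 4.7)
The forward direction is an immediate one-line corollary of Proposition~\ref{do not alter}: any $z \in D_{ct}^{*}(x) \cap D_{ct}^{*}(y)$ satisfies $\sigma(x) = \sigma(z) = \sigma(y)$. The substance of the theorem thus lies in the converse, and my plan is to construct, whenever $\sigma(x) = \sigma(y)$, an explicit common CT-descendant of $x$ and $y$ by working one signature-block at a time.

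Suppose $\sigma(x) = \sigma(y) = a_0 a_1 \cdots a_{l-1}$ and decompose $x = x^{(0)} x^{(1)} \cdots x^{(l-1)}$ and $y = y^{(0)} y^{(1)} \cdots y^{(l-1)}$ with $x^{(i)}, y^{(i)} \in a_i^{\oplus}$. The key technical observation I will need is the following single-block statement: inside $a_i^{\oplus}$ every symbol lies in $\{a_i, \overline{a_i}\}$, so for any symbol $c$ appearing in the block and any $d \in \{a_i, \overline{a_i}\}$, the operation that inserts $d$ immediately after $c$ is a legitimate CT-step (a tandem duplication if $d=c$, a complement insertion if $d=\overline{c}$). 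Iterating, one reaches from $x^{(i)}$ any element of $a_i^{\oplus}$ that begins with $a_i$ and contains $x^{(i)}$ as a subsequence, and the analogous statement holds for $y^{(i)}$. This is the step I expect to require the most care, because one must verify that each insertion is valid with respect to the sequence \emph{as it currently is} and never ``leaks'' into an adjacent block whose leading symbol $a_{i \pm 1} \notin \{a_i, \overline{a_i}\}$; since every anchor symbol and every inserted symbol lies in $\{a_i, \overline{a_i}\}$, the manipulations stay safely within block $i$.

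Granting this, a natural common extension is $z^{(i)} := x^{(i)} \cdot y^{(i)}_{\geq 1}$, where $y^{(i)}_{\geq 1}$ denotes $y^{(i)}$ with its leading symbol removed. From $x^{(i)}$ one reaches $z^{(i)}$ by appending the symbols of $y^{(i)}_{\geq 1}$ one at a time at the tail of block $i$; from $y^{(i)}$ one reaches $z^{(i)}$ by inserting the non-leading symbols of $x^{(i)}$ immediately after the shared leading $a_i$. The concatenation $z = z^{(0)} z^{(1)} \cdots z^{(l-1)}$ is then a common CT-descendant of $x$ and $y$, completing the converse.
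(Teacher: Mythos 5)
Your proof is correct, and the forward direction coincides with the paper's (both are immediate from Proposition~\ref{do not alter}). For the converse, however, you take a genuinely different route: the paper simply invokes \cite[Theorem 5]{ys2} to get $D_{c}^{*}(x)\cap D_{c}^{*}(y)\ne\emptyset$ and then uses the containment $D_{c}^{*}\subseteq D_{ct}^{*}$, whereas you construct a common CT-descendant explicitly, block by block, via $z^{(i)} = x^{(i)}\cdot y^{(i)}_{\geq 1}$. Your key observation --- that within a block $a_i^{\oplus}$ every symbol lies in $\{a_i,\overline{a_i}\}$, so after any such symbol one may insert \emph{either} element of $\{a_i,\overline{a_i}\}$ (a tandem duplication if equal, a complement insertion if complementary), and that no insertion can disturb adjacent blocks because $a_{i\pm1}\notin\{a_i,\overline{a_i}\}$ --- is sound, and both the forward pass (appending $y^{(i)}_{\geq 1}$ at the tail of block $i$ of $x$) and the backward pass (inserting $x^{(i)}_{\geq 1}$ right after the shared leading $a_i$ of block $i$ of $y$) are legitimate sequences of CT-steps. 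What each approach buys: the paper's proof is two lines but leans on an external result proved for the pure complement-insertion channel; yours is self-contained and makes essential use of the extra freedom afforded by tandem duplications (inserting $d=c$ after $c$), which makes the common-descendant construction more direct than the purely complement-insertion construction underlying \cite[Theorem 5]{ys2}. The only presentational caveat is that your middle paragraph states a more general reachability claim (any supersequence within $a_i^{\oplus}$ sharing the leading symbol) than you actually need; for the specific target $z^{(i)}$ the verification is immediate.
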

 \begin{proof}
 Let $z \in  D_{ct}^{*}(x) \cap D_{ct}^{*}(y) \ne \emptyset $. Then by Proposition \ref{do not alter}, we have
 \begin{equation*}
 	\sigma (x)= \sigma (z)=\sigma (y).
 \end{equation*}
 For the other direction, let $\sigma (x)= \sigma (y)$. Then by using \cite[Theorem 5]{ys2}, we have 
 \begin{equation*}
 	 D_{c}^{*}(x) \cap D_{c}^{*}(y) \ne \emptyset,
 \end{equation*} 
 which follows that $ D_{ct}^{*}(x) \cap D_{ct}^{*}(y) \ne \emptyset $.
 \end{proof}

Complement insertions and 1-tandem duplications do not alter the signature of a sequence, but in the \textit{noisy} insertion channel, the signature might change. In our construction, for convenience, we restrict the codebook to $\mathrm{Irr}(n)$, the set of irreducible sequences of length $n$. We will see this restriction does not decrease the asymptotic rate.\par 
By using constraint coding approaches and investigating how the signatures change in the \textit{noisy} channel, we construct codes that enable the decoder to recover the signatures (i.e., irreducible codewords) in a unique way. In the \textit{noisy} channel, for a given sequence $x\in \mathbb{Z}_{q}^{n}$, its descendant $x''\in D_{ct}^{*(\le 1)}(x)$ can be generated in the following three ways:
 \begin{enumerate}
 \item complement insertions and 1-tandem duplications;
 \item  many complement insertions and 1-tandem duplications followed by a random insertion;
 \item  complement insertions and 1-tandem duplications followed a random insertion, which followed by complement insertions and 1-tandem duplications.
 \end{enumerate}
 
By Proposition~\ref{do not alter}, complement insertions and 1-tandem duplications do not alter the signature of a sequence; hence, only case 2 needs to be considered for $\sigma(x'')$. Further, in case $2$, we assume that the random insertion occurs in $x'$, which is obtained from $x$ by many complement insertions and 1-tandem duplications. Since we restrict the codeword $x$ to the set of irreducible sequences $\mathrm{Irr}(n)$, then $$\sigma(x')=\sigma(x)=x.$$
 Therefore, we aim at figuring out how does a random insertion alters the the signature of $x'$, and then design corresponding error-correcting code (ECC) to correct the errors in signatures to recover $\sigma(x')$, illustrated as follows.
$$x=\sigma(x')
\mathrel{
	\substack{
		\xrightarrow{\text{\fontsize{9}{10}\selectfont signature altered}} \\[0.5ex]
		\xleftarrow[\text{\fontsize{9}{10}\selectfont recover via ECC}]{\hspace{2em}}
	}
}
\sigma(x'').$$
\begin{lemma}\label{a noisy alter}
Suppose $x\in \mathrm{Irr}(n)$ and $x'\in D_{ct}^{*}(x)$. Let a random insertion occur in $x'$ and the resulting new sequence be $x''$. Then there are the following three possible ways:
\begin{itemize}
	\item[(i)] a point insertion,
	\item[(ii)] a point substitution by a complementary symbol,
	\item [(iii)] a burst insertion of length 2,
\end{itemize}
for the signature $x=\sigma(x')$ to be changed into $\sigma(x'')$.
\end{lemma}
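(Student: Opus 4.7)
The plan is to carry out a case analysis based on where the single random insertion lands inside $x'$ and which complement pair the inserted symbol belongs to. Since $x$ is irreducible and $x'\in D^{*}_{ct}(x)$, Proposition~\ref{do not alter} gives $\sigma(x')=x=a_{0}a_{1}\ldots a_{l-1}$, so $x'$ decomposes into signature blocks $x' = a_0^{(0)} a_1^{(1)} \cdots a_{l-1}^{(l-1)}$ with $a_i^{(i)}\in a_i^{\oplus}$. I would let $c$ denote the randomly inserted symbol and write $x''$ as $x'$ with $c$ placed between a prefix ending in some letter $p$ and a suffix starting with some letter $s$ (where $p$ or $s$ may be empty at the ends), then determine $\sigma(x'')$ from the complement pair to which each of $p,s,c$ belongs.

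First I would handle the \emph{interior} case where $p$ and $s$ lie in the same block $a_i^{(i)}$, so $p,s\in\{a_i,\overline{a_i}\}$. If $c\in\{a_i,\overline{a_i}\}$, the block simply grows and $\sigma$ is unchanged; otherwise $c\notin\{a_i,\overline{a_i}\}$ opens a fresh signature letter, and since $\{c,\overline{c}\}$ is disjoint from $\{a_i,\overline{a_i}\}$ the letter $s$ opens yet another new block. This injects two letters $c,s$ between $a_i$ and $a_{i+1}$ in the signature, which is case~(iii).

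Next I would handle the \emph{boundary} case where $p\in\{a_i,\overline{a_i}\}$ and $s=a_{i+1}$, together with the endpoint cases where $p$ or $s$ is empty. If $c$ lies in the same complement pair as one of the two adjacent blocks, the affected block absorbs $c$, and either the signature is left unchanged (when the opening letter of the block is preserved) or that single opening letter flips to its complement, yielding case~(ii). If $c$ lies outside both neighbouring pairs, then $c$ forms a new singleton signature letter between $a_i$ and $a_{i+1}$ (or at an end), giving case~(i).

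The main obstacle will be dealing carefully with the subcase $c\in\{a_{i+1},\overline{a_{i+1}}\}$ at a block boundary: I need to verify that $c=a_{i+1}$ leaves $\sigma$ untouched because the newly opened block still begins with $a_{i+1}$, whereas $c=\overline{a_{i+1}}$ makes block $i+1$ open with $\overline{a_{i+1}}$ and thus performs a single-letter substitution in the signature rather than an insertion or a burst. Once every interior, boundary and endpoint configuration is checked, every possible change to $\sigma$ falls into exactly one of (i), (ii), or (iii), completing the proof.
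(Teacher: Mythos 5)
Your proposal is correct and follows essentially the same route as the paper's proof: decompose $x'$ into the signature blocks $a_i^{\oplus}$, split into the cases where the random symbol lands in the interior of a block versus at a block boundary (including endpoints), and then sort by which complement pair the inserted symbol belongs to, which yields exactly the unchanged/substitution/point-insertion/length-2-burst outcomes. The one subtlety you flag — that $c=a_{i+1}$ at a boundary leaves $\sigma$ fixed while $c=\overline{a_{i+1}}$ flips a single signature letter — is precisely the distinction the paper's Case~1 makes, so no gap remains.
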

\begin{proof}
	Assume $x=x_{1}x_{2}\ldots x_{n}\in \mathrm{Irr}(n)$, then $x'\in x_{1}^{\oplus}x_{2}^{\oplus}\ldots x_{n}^{\oplus}$. Let $x'= B_{1}B_{2}\ldots B_{n} $, where $B_{i}\in x_{i}^{\oplus}$ for $1\leq i \leq n$. Suppose the random insertion occurs in $B_{i} = x_{i} b_{i,1}\ldots b_{i,k}$, where $b_{i,l}\in \left \{ x_{i},\overline{ x_{i}} \right \}  $, $1\le l\le k$, and the inserted symbol is denoted by $r\in \mathbb{Z}_{q}$. The way of altering the signature depends on the location at which the insertion occurs in $B_{i}$. Now we consider the following two cases:
	\begin{case} 
		If the point insertion occurs at the end of $B_{i}$, that is, 
		$$x'=B_{1}\ldots B_{i-1}{\color{blue}x_{i}b_{i,1}\ldots b_{i,k} }B_{i+1}\ldots B_{n} $$ $$\Downarrow$$ $$x''=B_{1}\ldots B_{i-1}{\color{blue}x_{i}b_{i,1}\ldots b_{i,k}{\color{red}r} }B_{i+1}\ldots B_{n}$$
		
		\begin{itemize}
			\item[(i)] If $r\in \left \{ x_{i},\overline{ x_{i}} \right \}$ or $r=x_{i+1}$, then $\sigma(x'')=\sigma(x')=x$, the signature remains unchanged.
			\item [(ii)] If $r=\overline{x_{i+1}}$, the signature  changes from $x=x_{1}\ldots x_{i} x_{i+1}x_{i+2}\ldots x_{n}$ to $\sigma(x'')=x_{1}\ldots x_{i} \overline{x_{i+1}}x_{i+2}\ldots x_{n}$, where a substitution of a complementary symbol happens.
			\item[(iii)] If $r\notin  \left \{ x_{i},\overline{ x_{i}}, x_{i+1},  \overline{ x_{i+1}}\right \} $, the signature changes from $x=x_{1}\ldots x_{i} x_{i+1}\ldots x_{n}$ to $\sigma(x'')=x_{1}\ldots x_{i} {\color{red}r}x_{i+1}\ldots x_{n}$, where a point insertion happens.
		\end{itemize}
	\end{case}
	\begin{case}
		If the point insertion does not occur at the end of $B_{i}$, denote $x_{i} = b_{i,0}$, then there exist some $l$ ($0\le l\le k-1$), such that $$x'=B_{1}\ldots B_{i-1}{\color{blue}x_{i}b_{i,1}\ldots b_{i,l}b_{i,l+1} \ldots b_{i,k} }B_{i+1}\ldots B_{n}$$ $$\Downarrow$$ $$x''=B_{1}\ldots B_{i-1}{\color{blue}x_{i}b_{i,1}\ldots b_{i,l}{\color{red}r}b_{i,l+1} \ldots b_{i,k} }B_{i+1}\ldots B_{n}.$$
		\begin{itemize}
			\item[(i)] $r\in \left \{ x_{i},\overline{ x_{i}} \right \}$, then the signature remains unchanged.
			\item [(ii)] $r\notin \left \{ x_{i},\overline{ x_{i}} \right \}$, the signature changes from $x=x_{1}\ldots x_{i} x_{i+1}\ldots x_{n}$ to $\sigma(x'')=x_{1}\ldots x_{i} {\color{red}rb_{i,l+1}}x_{i+1}\ldots x_{n}$, where $b_{i,l+1}\in  \left \{ x_{i},\overline{ x_{i}} \right \}$. This is a burst insertion of length 2.
		\end{itemize}
	\end{case}
\end{proof}

In order to correct the three types of errors that occur in signatures, we shall further impose corresponding constraints on $\mathrm{Irr}(n)$. We first address the case of a point substitution by a complementary symbol. 

Now, we present the following code construction, which can correct the above error and plays a crucial role in our final code construction.
\begin{construction}\label{correct a complementary symbol}
Given integers $0\le a \le 2q-1$ and $0\le b\le qn-1$, we construct the code $C_{(a,b)}(n)$ as 
\begin{align}\label{Const1}
	C_{(a,b)}(n) = \left \{ u\in \mathbb{Z}_{q}^{n}\mid\sum_{i=1}^{n}u_{i} \equiv a \pmod {2q} ,\quad \sum_{i=1}^{n}iu_{i} \equiv b\pmod{qn}  \right \}. 
\end{align}
\end{construction}

\begin{theorem}
The code $C_{(a,b)}(n)$ defined by Eq. \eqref{Const1}, is able to correct a point substitution by a complementary symbol.
\end{theorem}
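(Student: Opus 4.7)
The plan is to show that no two distinct codewords of $C_{(a,b)}(n)$ can produce the same received word after at most one complementary substitution. Suppose $u,v\in C_{(a,b)}(n)$ both yield a common word $u'$ under at most one complementary substitution each. Since such a substitution alters exactly one coordinate, $u$ and $v$ can disagree in at most two positions, and wherever they disagree one coordinate is the complement of the other. I would split this into two cases.

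\textbf{Case A} (one differing position $i$, with $v_i=\overline{u_i}=q-1-u_i$). Subtracting the first congruence in Eq.~\eqref{Const1} for $u$ and $v$ yields $q-1-2u_i\equiv 0\pmod{2q}$. Since $q-1-2u_i$ is odd, it cannot be a multiple of $2q$, a contradiction.

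\textbf{Case B} (two differing positions $i\ne k$, with $v_i=\overline{u_i}$ and $v_k=\overline{u_k}$). Write $s=q-1-2u_i$ and $t=q-1-2u_k$; both are odd integers with $|s|,|t|\le q-1$. The first constraint gives $s+t\equiv 0\pmod{2q}$, and since $|s+t|\le 2(q-1)<2q$ this forces $s+t=0$, i.e., $t=-s$. The second constraint then becomes $(i-k)s\equiv 0\pmod{qn}$. Letting $d=\gcd(s,qn)$, the fact that $d$ divides the nonzero integer $s$ gives $d\le|s|\le q-1$, so $qn/d\ge qn/(q-1)>n>|i-k|$. Since $qn/d$ divides $i-k$, this forces $i=k$, contradicting $i\ne k$.

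The main obstacle is balancing the two moduli. The modulus $2q$ (rather than $q$) is essential so that the change $q-1-2u_i$ caused by one complementation is an odd, hence nonzero, residue, making a single error detectable; the modulus $qn$ combined with the bound $\gcd(s,qn)\le q-1$ is exactly what pins down the error position in $\{1,\dots,n\}$. Both steps rest on the elementary inequality $|q-1-2u_i|\le q-1$, which promotes the modular congruences into genuine integer equalities and is what ultimately makes the coincidence $s+t=0$ and the uniqueness of $i$ possible.
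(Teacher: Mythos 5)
Your proof is correct and follows essentially the same route as the paper: both arguments compare the two checksum syndromes, use that the change $q-1-2u_i$ is a nonzero (odd) residue modulo $2q$ to force the substituted value, and then bound $\left|(i-k)s\right|<qn$ to pin down the error position. The only difference is that you also treat the zero-error versus one-error ambiguity explicitly (your Case A), which the paper's proof leaves implicit.
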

\begin{proof}
When a point substitution by a complementary symbol occurs in $u \in \mathbb{Z}_{q}^{n}$, let the symbol $w$ at the $i$-th position be replaced by its complement $q-1-w$. We prove that the pair $(w,i)$ can be uniquely determined, thus $u$ is uniquely recovered. Let $u''$ be obtained from $u$, then we have
\begin{equation}\label{1chan}
\sum_{i=1}^{n}u''_{i}-\sum_{i=1}^{n}u_{i}=q-1-2w,
\end{equation}

\begin{equation}\label{2chan}
	 \sum_{i=1}^{n}iu''_{i}- \sum_{i=1}^{n}iu_{i}=i(q-1-2w).
\end{equation}
If there exists another sequence $u' \in \mathbb{Z}_{q}^{n}$ with a pair $(w',i')$ such that $u''$ is obtained by substituting $w'$ with its complement at position $i'$, then it follows from Eqs.~(\ref{1chan}) and~(\ref{2chan}) that
\begin{equation}\label{1chanmod}
q-1-2w \equiv q-1-2w'\pmod{2q},
\end{equation}

\begin{equation}\label{2chanmod}
	i(q-1-2w) \equiv i'(q-1-2w')\pmod{qn}.
\end{equation}
By Eq. (\ref{1chanmod}), $q$ divides $w-w'$, which means $w=w'$. By Eq. (\ref{2chanmod}), we have 
\begin{equation}
qn\mid	(i-i')(q-1-2w). 
\end{equation}
However, $\left |  (i-i')(q-1-2w) \right |  \le (q-1)(n-1)<qn$, then $i=i'$, which completes the proof.
\end{proof}
\begin{corollary}\label{cab}
For any $q$ and $n$, there exist some $a$ and $b$ such that $$|\mathcal{C}_{(a,b)}(n)|\ge \displaystyle\frac{q^{n-2}}{2n}.$$
\end{corollary}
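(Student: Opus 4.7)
The plan is to prove Corollary~\ref{cab} by a direct pigeonhole argument on the two defining congruences of Construction~\ref{correct a complementary symbol}. The idea is that the sets $C_{(a,b)}(n)$, as $(a,b)$ ranges over all admissible pairs, form a partition of $\mathbb{Z}_q^n$, so some cell of this partition must be at least as large as the average.

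First I would observe that the map
\[
\Phi : \mathbb{Z}_q^n \longrightarrow \mathbb{Z}_{2q} \times \mathbb{Z}_{qn}, \qquad u \longmapsto \Bigl(\sum_{i=1}^n u_i \bmod 2q,\; \sum_{i=1}^n i\,u_i \bmod qn\Bigr),
\]
is well-defined, and that $C_{(a,b)}(n) = \Phi^{-1}(a,b)$. Hence the family $\{C_{(a,b)}(n)\}_{(a,b)\in \mathbb{Z}_{2q}\times \mathbb{Z}_{qn}}$ partitions $\mathbb{Z}_q^n$ into at most $|\mathbb{Z}_{2q}\times \mathbb{Z}_{qn}| = 2q\cdot qn = 2q^2 n$ classes.

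Next I would apply the pigeonhole principle: since $|\mathbb{Z}_q^n| = q^n$ and the total number of classes is $2q^2 n$, there must exist at least one pair $(a,b)$ for which
\[
|C_{(a,b)}(n)| \;\ge\; \frac{q^n}{2q^2 n} \;=\; \frac{q^{n-2}}{2n},
\]
which is exactly the stated bound.

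There is essentially no obstacle here; the argument is elementary and relies only on the fact that the two congruence conditions defining $C_{(a,b)}(n)$ take values in sets of size $2q$ and $qn$ respectively. No finer analysis of how the sums distribute over $\mathbb{Z}_q^n$ is needed, since an averaging bound already matches the target $q^{n-2}/(2n)$.
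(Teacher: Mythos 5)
Your proposal is correct and matches the paper's own proof: both observe that the codes $C_{(a,b)}(n)$ partition $\mathbb{Z}_q^n$ into $2q\cdot qn = 2q^2n$ classes and apply the pigeonhole principle to obtain the average-size bound $q^n/(2q^2n)=q^{n-2}/(2n)$. Your write-up simply makes the counting of classes explicit, which the paper leaves implicit.
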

\begin{proof}
The code family $C_{(a,b)}(n)$ ($0\le a\le 2q-1$, $0\le b\le qn-1$)  $\mathbb{Z}_{q}^{n}$ form a partition of $\mathbb{Z}_{q}^{n}$, then the lower bound follows from the pigeonhole principle.
\end{proof}
\begin{example}
In this example, let $q=4$, we list all codewords for the code $\mathcal{C}_{(0,0)}(6)$.

	$\mathcal{C}_{(0,0)}(6)$ $$=
 \left\{
\begin{aligned}
	&(0, 0, 0, 0, 0, 0),\ (0, 3, 2, 3, 0, 0),\ (0, 3, 3, 1, 1, 0),\ (1, 1, 3, 3, 0, 0),\ (1, 2, 2, 2, 1, 0), \\
	&(1, 2, 3, 0, 2, 0),\ (1, 2, 3, 1, 0, 1),\ (1, 3, 0, 3, 1, 0),\ (1, 3, 1, 1, 2, 0),\ (1, 3, 1, 2, 0, 1), \\
	&(1, 3, 2, 0, 1, 1),\ (2, 0, 3, 2, 1, 0),\ (2, 1, 1, 3, 1, 0),\ (2, 1, 2, 1, 2, 0),\ (2, 1, 2, 2, 0, 1), \\
	&(2, 1, 3, 0, 1, 1),\ (2, 2, 0, 2, 2, 0),\ (2, 2, 0, 3, 0, 1),\ (2, 2, 1, 0, 3, 0),\ (2, 2, 1, 1, 1, 1), \\
	&(2, 2, 2, 0, 0, 2),\ (2, 3, 0, 0, 2, 1),\ (2, 3, 0, 1, 0, 2),\ (3, 0, 1, 2, 2, 0),\ (3, 0, 1, 3, 0, 1), \\
	&(3, 0, 2, 0, 3, 0),\ (3, 0, 2, 1, 1, 1),\ (3, 0, 3, 0, 0, 2),\ (3, 1, 0, 1, 3, 0),\ (3, 1, 0, 2, 1, 1), \\
	&(3, 1, 1, 0, 2, 1),\ (3, 1, 1, 1, 0, 2),\ (3, 2, 0, 0, 1, 2)
\end{aligned}
\right\}
.$$
There are $33$ codewords in total, while the lower bound in Corollary \ref{cab} is $\displaystyle\frac{4^4}{12}<22$.
\end{example}

To correct a point insertion, we use a slightly modified version of the $q$-ary Varshamov-Tenengolts (VT) code \cite{Tenengolts1984}, which is a non-binary generalization of the binary VT code \cite{vi} and corrects a point insertion or deletion. To meet the constraint in Construction \ref{correct a complementary symbol}, the modified version uses the congruency $2q$ instead of $q$ in the original code, as the following construction shows.
\begin{construction}\label{qVT}
Given integers $0\le c \le 2q-1$ and $0\le d \le n-1$, the following code is able to correct a point insertion.
\begin{align}
	C_{T(c,d)}(n) = \left \{ u\in \mathbb{Z}_{q}^{n}\mid\sum_{i=1}^{n}u_{i} \equiv c\pmod {2q} ,\quad \sum_{i=1}^{n}(i-1)\beta _{i} \equiv d\pmod{n}  \right \} ,
\end{align}
where $\beta_{1}=1$ and for $2 \le i \le n$, \[
\beta_{i} = 
\begin{cases} 
	1 & \text{if } u_{i} \geq u_{i-1}, \\ 
	0 & \text{if } u_{i} < u_{i-1}. 
\end{cases}
\]
\end{construction}

By Lemma \ref{a noisy alter}, the random insertion may also cause a burst insertion of length 2 in signatures. Codes correcting a burst insertion have gained significant attention in recent years \cite{sw,Schoeny2,sn1, lz, wangshu 2023,song2023}, with the best known code proposed very recently by Sun~\textit{et~al.} \cite{sun2025}. Early in 2017, Schoeny~\textit{et~al.} \cite{sw} proved that codes can correct a burst insertions if and only if correct a burst deletions.\par 
To correct the burst insertion in the signature, we employ the combination of \emph{$P$-bounded single-deletion-correcting code} and\emph{ run-length limited (RLL) VT-code}, which was first formalized by  Schoeny~\textit{et~al.} \cite{sw}. We first cover some useful concept and lemma, and then propose a modified code construction (with different parameters) to meet the restriction that our codewords is chosen from $\mathrm{Irr}(n)$.
\begin{definition}
A set is called a $P$-\textit{bounded single-deletion-correcting code} if the decoder can correct a single deletion given knowledge of the location of the deleted symbol to within $P$ consecutive positions.
\end{definition}
We will employ the following $q$-ary shifted VT (SVT) code, which is a modiﬁed version of the $q$-ary VT-code.
\begin{construction} \label{SVT code}Given $0\le e\le P$, $0\le f\le q-1$ and $g\in \left\{0,1\right\}$, the $q$-ary shifted VT code $SVT_{(e,f,g)}(n,P,q)$ is defined to be 
	\begin{equation}\label{shiftvt}
		\begin{split}
			SVT_{(e,f,g)}(n,P,q) = \bigg\{ x\in\mathbb{Z}_{q}^{n}\mid 
		\displaystyle\sum_{i=1}^{n} i\beta_{i}\equiv e \pmod {P+1},&\displaystyle\sum_{i=1}^{n} x_{i}\equiv f\pmod {q},\\
		&\displaystyle\sum_{i=1}^{n}\beta_{i}\equiv g\pmod {2} \bigg\}.
		\end{split}
	\end{equation}

where $\beta_{1}=1$ and for $2 \le i \le n$, \[
\beta_{i} = 
\begin{cases} 
	1 & \text{if } x_{i} \geq x_{i-1}, \\ 
	0 & \text{if } x_{i} < x_{i-1}. 
\end{cases}
\]
\end{construction}

Construction 3 modifies the standard $q$-ary VT code by replacing the modulo-$n$ condition with a modulo-$(P+1)$ constraint to reduce redundancy. The additional parity constraint $\sum_{i=1}^{n}\beta_{i}\equiv g \pmod{2}$ is introduced to compensate for the loss of positional information. As formally stated in the following lemma, this structure is sufficient to correct a single deletion when its location is known to within $P$ positions.

\begin{lemma}
For any $0\le e\le P$, $0\le f\le q-1$ and $g\in \left\{0,1\right\}$, the $q$-ary shifted VT code $SVT_{(e,f,g)}(n,P,q)$ is a $P$-bounded single-deletion-correcting code.
\end{lemma}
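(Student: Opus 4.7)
My plan is to reduce the decoding of $SVT_{(e,f,g)}(n,P,q)$ to a bounded binary VT decoding on the derived sequence $\beta(x)$, mirroring how the standard $q$-ary VT code reduces $q$-ary decoding to the binary case.

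First I would use the constraint $\sum_{i=1}^{n} x_i\equiv f\pmod q$ to recover the value of the deleted symbol: if $y$ denotes the length-$(n-1)$ received word, then the deleted value is
\[
v\equiv f-\sum_{i=1}^{n-1}y_i\pmod q,
\]
which is uniquely determined in $\mathbb{Z}_q$. Note that this step does not use the $P$-boundedness at all.

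Next I would locate the deletion within the given window. For this I would analyze how $\beta(y)$ relates to $\beta(x)$ when $x_k$ is deleted: a short case analysis on the order relations among $x_{k-1},x_k,x_{k+1}$ shows that $\beta(y)$ is obtained from $\beta(x)$ by deleting a single bit and possibly modifying at most one adjacent bit, where both changes are fully determined by $v$ and by the hypothesized position $k$. This reduces the task to recovering a single binary deletion in $\beta(x)$ whose location is known to lie in the given window. To this reduced binary problem I would apply the classical Levenshtein/VT argument localized to the window: the parity $\sum\beta_i\equiv g\pmod 2$ reveals whether the deleted bit was a $0$ or a $1$, and $\sum i\beta_i\equiv e\pmod{P+1}$ then pinpoints the position, since the syndrome differences induced by the candidate positions are all distinct modulo $P+1$.

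The main obstacle is the bookkeeping in the $\beta$-analysis. Because the effect of a deletion on $\beta$ depends on the local order relations among the $x_i$, one must verify case by case that the transformation from $\beta(x)$ to $\beta(y)$ is indeed determined by $v$ together with the position hypothesis $k$, and that the corresponding changes in $\sum i\beta_i$ exhaust a full residue system modulo $P+1$ as $k$ ranges over the candidate window. Once this case analysis is handled (as in \cite{Schoeny2}), the modular constraints deliver a unique reconstruction, which together with $v$ from the first step recovers $x$, completing the proof.
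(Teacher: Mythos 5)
The paper does not prove this lemma at all --- it is imported verbatim from \cite{Schoeny2} --- and your outline is essentially the standard argument from that source: recover the deleted value from the mod-$q$ checksum, translate the symbol deletion into a single-bit deletion in the derived sequence $\beta$, and finish with the localized Levenshtein argument using the parity and the mod-$(P+1)$ weighted sum. The one point worth tightening is your description of the effect on $\beta$: since the new comparison bit $\mathbb{1}[x_{k+1}\ge x_{k-1}]$ always lies in $\{\beta_k,\beta_{k+1}\}$, the net effect is \emph{exactly} a single-bit deletion at position $k$ or $k+1$ (not a deletion plus an adjacent substitution), which is what makes the clean reduction to the binary bounded-VT problem, with the window enlarged by one, go through.
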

\begin{proof}
	Let $x \in SVT_{(e,f,g)}(n,P,q)$ be the transmitted codeword, and let $x'$ be the sequence of length $n-1$ received after a single symbol deletion. By the definition of a $P$-bounded single-deletion-correcting code, the decoder is provided with a window of $P$ consecutive indices, say $W = \{j, j+1, \dots, j+P-1\}$, indicating the potential locations of the deleted symbol in $x$.
	
	First, the value of the deleted symbol, denoted by $v$, can be uniquely determined using the second constraint of Construction 3. Since $\sum_{i=1}^n x_i \equiv f \pmod{q}$, we simply have $v \equiv f - \sum_{i=1}^{n-1} x'_i \pmod{q}$.
	
	Next, we must determine the exact position of the deletion. Let $\beta \in \{0,1\}^n$ and $\beta' \in \{0,1\}^{n-1}$ be the binary difference sequences (as defined in Eq. (\ref{shiftvt})) corresponding to $x$ and $x'$, respectively. Note that a single symbol deletion in $x$ at index $k$ ($k \in W$) translates to exactly one bit deletion in the binary sequence $\beta$. Specifically, depending on the values of the adjacent symbols, the deleted bit in $\beta$ is either $\beta_k$ or $\beta_{k+1}$. Since the spatial index $k$ is restricted to the $P$ positions in $W$, the location of the deleted bit in $\beta$ is strictly constrained to a window of at most $P+1$ positions: $\{j, j+1, \dots, j+P\}$.
	
	The first and third constraints of the code state that $\sum_{i=1}^n i\beta_i \equiv e \pmod{P+1}$ and $\sum_{i=1}^n \beta_i \equiv g \pmod{2}$. These two congruences precisely constitute a binary shifted VT code over $\beta$. It has been established in \cite{sw} that a binary shifted VT code with modulus $P+1$ is capable of correcting a single bit deletion provided the deletion location is known to within a window of $P+1$ positions. Consequently, the complete binary sequence $\beta$ can be uniquely recovered from $\beta'$.
	
	Once the binary difference sequence $\beta$ is recovered, the run structure (i.e., the lengths of all non-decreasing and strictly decreasing runs) of the original codeword $x$ is completely revealed. Following the standard decoding principles of the $q$-ary VT code, knowing both the value of the deleted symbol $v$ and the restored run structure from $\beta$ allows us to uniquely identify the exact run from which $v$ was deleted. Since reinserting $v$ into any position within its native run in $x'$ yields the identical sequence, the original codeword $x$ is uniquely recovered.
\end{proof}
To correct a burst deletion of length 2, we treat a codeword $x$ as a $2\times \frac{n}{2}$ codeword array $A_{2}(x)$, where the code length $n$ is supposed to be even. A codeword is put in the array transmitted column-by-column, so that when a 2-burst insertion happens, each row of $A_{2}(x)$ suffers from a point deletion:
$$A_{2}(x)=\begin{bmatrix}
	x_{1} &\ldots&x_{i}& \ldots &x_{n-1} \\
	x_{2}&\ldots&x_{i+1}&\ldots  &x_{n}
\end{bmatrix}.$$
Denote by $A_{2}(x)_{i}$ the $i$-th row of $A_{2}(x)$. Furthermore, when suffering a 2-burst insertion, suppose the $j$-th bit of first row $A_{2}(x)_{1}$ is deleted, then the position of the deleted bit in $A_{2}(x)_{2}$ is either $j$ or $j-1$. Therefore, we will use run-length limited (RLL) VT-code to encode the $A_{2}(x)_{1}$, which ensure that the location of the deletion in $A_{2}(x)_{1}$ can be determined within consecutive positions of the maximum run length. Then the code can correct a 2-burst deletion (insertion) once $A_{2}(x)_{2}$ is encoded by a suitably chosen $P$-bounded single-deletion-correcting code.
\begin{definition}
	A $q$-ary vector $x$ of length $n$ is said to satisfy the $f(n)$-RLL$(n,q)$ constraint, and is called an $f(n)$-RLL$(n,q)$ vector, if the length of its longest run is at most $f(n)$.
\end{definition}

Further, denote the set of all $f(n)$-RLL$(n, q)$ vectors by $S_{n}(f(n))$ and let 
$$S^{\mathrm{Irr}}_{n}(f(n)) = \left \{ x\in \mathbb{Z}_{q}^{n}\mid A_{2}(x)_{1}\in S_{n/2}(f(n)),x\in\text{Irr}(n)  \right \}.$$
We establish a lower bound on the size of the set $S^{\mathrm{Irr}}_{n}(f(n))$ in the following lemma.

\begin{lemma} Let $f(n)=3\log_{q}n+2$, we have the following lower bound:
	$$\left | S^{\mathrm{Irr}}_{n}(3\log_{q}(n)+2) \right | \ge q(q-2)^{n-1} \left(1-\frac{1}{2(q-2)n^{1/2}}\right).$$
\end{lemma}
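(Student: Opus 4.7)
The plan is to bound $|S^{Irr}_n(f(n))|$ from below by bounding the size of its complement within $\text{Irr}(n)$, namely the set of irreducible sequences whose odd-indexed subsequence $A_2(x)_1 = (x_1, x_3, \ldots, x_{n-1})$ contains a run strictly longer than $f(n)$. First I would note that $|\text{Irr}(n)| = q(q-2)^{n-1}$, since $x_1$ admits $q$ choices and each subsequent coordinate has $q-2$ valid choices (the two-element set $\{x_i, \overline{x_i}\}$ being forbidden). Writing $k = \lfloor f(n) \rfloor + 1$ for the smallest forbidden run length, the problem reduces to upper bounding the number of irreducible sequences that contain $k$ consecutive equal entries at odd positions.

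Next, I would apply a union bound over the starting position $j \in \{1, \ldots, n/2 - k + 1\}$ and the common value $v \in \mathbb{Z}_q$ of such a substring. For each fixed $j$ and $v$, I count irreducible $x$ satisfying $x_{2j-1} = x_{2j+1} = \cdots = x_{2j+2k-3} = v$ via a prefix--middle--suffix decomposition. The prefix $x_1 \cdots x_{2j-1}$ is an irreducible sequence of length $2j-1$ ending in $v$, contributing $(q-2)^{2j-2}$ choices (build it right-to-left, each coordinate having $q-2$ options). Each of the $k-1$ intermediate even coordinates $x_{2j}, x_{2j+2}, \ldots, x_{2j+2k-4}$ is sandwiched between two copies of $v$, so it only needs to avoid $\{v,\overline{v}\}$, giving $(q-2)^{k-1}$. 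The suffix $x_{2j+2k-2} \cdots x_n$ is an irreducible extension of $v$, contributing $(q-2)^{n-2j-2k+3}$. Multiplying and summing over $v$ yields $q(q-2)^{n-k}$ irreducible sequences with a forced $k$-run at position $j$, with the boundary cases $j=1$ and $j = n/2 - k + 1$ absorbed into the same formula via the exponents becoming $0$ or $1$. Dividing by $|\text{Irr}(n)|$, the fraction of irreducible sequences carrying a specific $k$-run at position $j$ is $(q-2)^{-(k-1)}$.

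Summing over the at most $n/2$ admissible starting positions bounds the complement fraction by $\tfrac{n/2}{(q-2)^{k-1}}$. Using $k-1 \geq f(n) - 1 = 3\log_q n + 1$, we get $(q-2)^{k-1} \geq (q-2)\cdot n^{3\log_q(q-2)}$. The elementary inequality $q-2 \geq \sqrt{q}$ holds for every $q \geq 4$ (since $(q-1)(q-4) \geq 0$), giving $\log_q(q-2) \geq 1/2$, hence $(q-2)^{k-1} \geq (q-2) n^{3/2}$. The complement fraction is therefore at most $\tfrac{1}{2(q-2)n^{1/2}}$, yielding
\[
|S^{Irr}_n(f(n))| \;\geq\; q(q-2)^{n-1}\left(1 - \frac{1}{2(q-2)n^{1/2}}\right).
\]

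The main technical hurdle is the counting step: one must carefully verify that the irreducibility constraints at the seams between the prefix, middle, and suffix blocks are exactly captured by the stated $q-2$ factors and that no double-counting or undercounting occurs when the prefix or suffix collapses to minimal length. Once the prefix--middle--suffix decomposition is justified, the remaining work is a routine algebraic comparison between $(q-2)^{k-1}$ and $n^{3/2}$ through the inequality $\log_q(q-2) \geq 1/2$ for $q \geq 4$.
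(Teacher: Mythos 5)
Your proof is correct and follows essentially the same route as the paper's: a union bound over the $\le n/2$ window positions, the observation that the fraction of irreducible sequences with a forced constant run of length $k$ at fixed odd positions is $(q-2)^{-(k-1)}$, and the comparison $(q-2)^{3\log_q n}=n^{3\log_q(q-2)}\ge n^{3/2}$ via $\log_q(q-2)\ge 1/2$ for $q\ge 4$. The only difference is presentational: you carry out the prefix--middle--suffix count explicitly, whereas the paper phrases the identical computation probabilistically as $P_e=(q-2)^{-(3\log_q n+1)}$.
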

\begin{proof}
	We use a probabilistic argument to derive the lower bound. Let $X_n$ be a random variable denoting the maximum run length of $A_2(x)_1$, where $x$ is chosen uniformly at random from $\mathrm{Irr}(n)$. We aim to compute an upper bound on the probability $P(X_n \ge 3\log_q n + 2)$.
	
	By the union bound, it is sufficient to evaluate the probability that a specific window of length $L = 3\log_q n + 2$ in $A_2(x)_1$ constitutes a run (i.e., all $L$ elements are identical). Let this probability be $P_e$.
	
	Note that $A_2(x)_1$ is formed by extracting alternate symbols from $x$ (e.g., $x_1, x_3, x_5, \dots$). Thus, a contiguous window of length $L$ in $A_2(x)_1$ corresponds to a contiguous sub-block of length $2L - 1$ in the original codeword $x$. Since $x$ is uniformly distributed over $\mathrm{Irr}(n)$, any such sub-block is uniformly distributed over the set of all valid irreducible sequences of length $2L - 1$. By the definition of irreducible sequences (Theorem 1), the total number of such sequences of length $2L - 1$ is $q(q-2)^{2L-2}$. This serves as the denominator of $P_e$.
	
	For this specific window in $A_2(x)_1$ to be a run, the $L$ odd-positioned symbols within the corresponding sub-block in $x$ must be identical, say equal to $\alpha \in \mathbb{Z}_q$. There are $q$ choices for $\alpha$. Meanwhile, the remaining $L-1$ even-positioned symbols in this sub-block must satisfy the irreducible constraint. Since each even-positioned symbol is sandwiched between two identical symbols $\alpha$, it can neither be $\alpha$ nor its complement $\overline{\alpha}$. Thus, there are exactly $q-2$ choices for each of the $L-1$ even-positioned symbols. This yields $q(q-2)^{L-1}$ valid configurations, which forms the numerator of $P_e$.
	
	Substituting the window length $L = 3\log_q n + 2$, we have $2L - 2 = 6\log_q n + 2$ and $L - 1 = 3\log_q n + 1$. Therefore, the probability $P_e$ is precisely given by:
	$$P_e = \frac{q(q-2)^{3\log_q n + 1}}{q(q-2)^{6\log_q n + 2}} = \frac{1}{(q-2)^{3\log_q n + 1}} = \frac{1}{q-2}\cdot\frac{1}{n^{3\log_q(q-2)}}$$
	
	Since the function $g(q) = \log_q(q-2)$ is strictly increasing for $q \ge 4$, we can apply the union bound over the $\frac{n}{2}$ possible starting positions of the window in $A_2(x)_1$:
	$$P(X_n \ge 3\log_q n + 2) \le \frac{n}{2} \cdot P_e = \frac{n}{2}\cdot\frac{1}{q-2}\cdot\frac{1}{n^{3\log_q(q-2)}} \le \frac{1}{2(q-2)n^{1/2}}$$
	
	Consequently, the lower bound on the size of the set $S_n^{\mathrm{Irr}}(3\log_q n + 2)$ is derived as:
	$$|S_n^{\mathrm{Irr}}(3\log_q n + 2)| = |\mathrm{Irr}(n)| \cdot (1 - P(X_n \ge 3\log_q n + 2))$$
	$$\ge q(q-2)^{n-1} \left(1 - \frac{1}{2(q-2)n^{1/2}}\right)$$
\end{proof}

Now, we restrict our codewords to the set $S^{\mathrm{Irr}}_{n}(3\log_{q}(n)+2)$, and encode $A_{2}(x)_{1}$ using the $q$-ary VT code $C_{T(h,w)}$ (see Construction~\ref{qVT}), and $A_{2}(x)_{2}$ using the SVT code $SVT_{e,f,g}(n,3\log_{q}(n)+3,q)$ (see Construction~\ref{SVT code}). Then the following resulting code can correct a 2-burst insertion or deletion.

\begin{construction}\label{Irr burst}
For given integers $0\le h \le 2q-1$, $0\le w \le n-1$, $0\le e\le 3\log_{q}(n)+3$, $0\le f\le q-1$ and $g\in \left\{0,1\right\}$, we construct the code
\begin{equation}
	\begin{split}
		C_{B(h,w,e,f,g)}(n) = \bigg\{  x \in S^{\mathrm{Irr}}_{n}&(3\log_{q}(n)+2) \mid A_{2}(x)_{1} \in C_{T(h,w)}(n/2), \\
		& A_{2}(x)_{2} \in SVT_{e,f,g}(n/2,3\log_{q}(n)+3,q) \bigg\}.
	\end{split}
\end{equation}

\end{construction}
\begin{lemma}
The code $C_{B\left(h,w,e,f,g\right)}(n)$ can correct a burst insertion of length 2.
\end{lemma}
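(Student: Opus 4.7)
The plan is to use the standard column-by-column array decomposition of Schoeny \emph{et al.} and argue that a burst insertion of length 2 in $x$ decomposes into a single insertion in each row of $A_2(x)$, which by the insertion--deletion equivalence for bursts can equivalently be treated as a single deletion in each row. Concretely, I would first observe that if the burst affects positions $2j{-}1,2j$ of $x$ (after the insertion), then row $1$ and row $2$ of $A_2(x)$ each receive exactly one inserted symbol, and these two insertions occur at neighboring indices in the two rows: if the insertion in row $1$ is at index $p$, then the insertion in row $2$ is at index $p$ or $p\pm 1$.

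Next I would decode row $1$: since $A_2(x)_1\in C_{T(h,w)}(n/2)$ (a modified $q$-ary VT code), Construction \ref{qVT} recovers row $1$ from the received row $1$ with a single extra symbol, and in doing so localizes the inserted (equivalently deleted) position up to its surrounding run in $A_2(x)_1$. Because $x\in S^{\mathrm{Irr}}_n(3\log_q(n)+2)$, the maximum run length in $A_2(x)_1$ is at most $3\log_q(n)+2$, so the position of the corrected insertion in row $1$ is determined to within a window of at most $3\log_q(n)+2$ consecutive indices.

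Then I would transfer this localization to row $2$: since the corresponding insertion in row $2$ differs in index from the row-$1$ insertion by at most $1$, the position of the insertion in $A_2(x)_2$ is known to lie in a window of at most $(3\log_q(n)+2)+1=3\log_q(n)+3$ consecutive indices. This is precisely the $P$-bound satisfied by $SVT_{e,f,g}(n/2,\,3\log_q(n)+3,\,q)$. Invoking the lemma cited from \cite{Schoeny2} that $SVT$ is a $P$-bounded single-deletion-correcting code (and hence, by the Schoeny--\emph{et al.} equivalence, single-insertion-correcting under the same localization), $A_2(x)_2$ is uniquely recovered. Interleaving the two recovered rows reconstructs $x$.

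The main delicate point I expect is the localization step: one must verify carefully that the $q$-ary VT decoder, together with the $f(n)$-RLL constraint on $A_2(x)_1$, really pins the inserted position down to within the maximum run length, and that the $\pm 1$ column-alignment slack between the two rows is indeed absorbed by taking $P=3\log_q(n)+3$ rather than $3\log_q(n)+2$. Once that is in place, the remainder is a routine two-step application of the two component codes, so I would keep the write-up short and focus the argument on the run-length-to-$P$-bound translation.
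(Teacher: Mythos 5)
Your proposal is correct and follows essentially the same route as the paper, which itself only states that the lemma ``follows directly from the previous discussion,'' namely the Schoeny \emph{et al.} array decomposition, the RLL-based localization in row $1$, and the $P$-bounded SVT decoding of row $2$; your write-up in fact supplies more detail than the paper does. The only minor point to tidy is the alignment slack between the two rows: for a length-$2$ burst read column-by-column the offset is one-sided (index $j$ or $j-1$), which is what justifies $P=3\log_q(n)+3$ rather than the $3\log_q(n)+4$ that a genuinely two-sided ``$p\pm 1$'' slack would require.
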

The proof follows directly from the previous discussion, and an application of the pigeonhole principle immediately yields the following lower bound on the code size.
\begin{corollary}\label{C_{B}}
There exist parameters with $0\le h \le 2q-1$, $0\le w \le n-1$, $0\le e\le 3\log_{q}(n)+3$, $0\le f\le q-1$ and $g\in \left\{0,1\right\}$, such that 
\begin{align}
	\left |C_{B\left(h,w,e,f,g\right)}(n) \right | \ge \frac{q(q-2)^{n-1}}{2\left(3\log_{q}(n)+4\right)q^{2}n}\left(1-\frac{1}{2(q-2)n^{1/2}}\right).
\end{align}
\end{corollary}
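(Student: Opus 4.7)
The plan is a direct averaging (pigeonhole) argument. The first step I would verify is that as the parameter tuple $(h,w,e,f,g)$ ranges over its declared values, the codes $C_{B(h,w,e,f,g)}(n)$ form a partition of the ambient set $S^{Irr}_{n}(3\log_{q}(n)+2)$. This is essentially automatic from the definitions: for any $x$ in this set, each of the five defining congruences of Construction~\ref{Irr burst} produces a unique residue, so the two VT-type congruences applied to the first row $A_{2}(x)_{1}$ pick out a single pair $(h,w)$ while the three SVT congruences applied to $A_{2}(x)_{2}$ pick out a single triple $(e,f,g)$. Hence every $x\in S^{Irr}_{n}(3\log_{q}(n)+2)$ belongs to exactly one code of the family, and the family partitions the ambient set.

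Next I would count the admissible tuples. The ranges give $2q$ values for $h$, $q$ values for $f$, $2$ values for $g$, $3\log_{q}(n)+4 = P+1$ values for $e$, and — crucially — $n/2$ values for $w$, since $C_{T(h,w)}$ is applied to the length-$n/2$ row $A_{2}(x)_{1}$, so the relevant modulus is $n/2$ rather than $n$. Multiplying gives
\begin{equation*}
(2q)\cdot (n/2)\cdot \bigl(3\log_{q}(n)+4\bigr)\cdot q \cdot 2 \;=\; 2q^{2} n\bigl(3\log_{q}(n)+4\bigr)
\end{equation*}
parameter tuples in total, so by the pigeonhole principle some class satisfies
\begin{equation*}
\bigl|C_{B(h,w,e,f,g)}(n)\bigr| \;\geq\; \frac{\bigl|S^{Irr}_{n}(3\log_{q}(n)+2)\bigr|}{2q^{2} n\bigl(3\log_{q}(n)+4\bigr)}.
\end{equation*}
Substituting the preceding lemma's bound
\begin{equation*}
\bigl|S^{Irr}_{n}(3\log_{q}(n)+2)\bigr| \;\geq\; q(q-2)^{n-1}\Bigl(1-\tfrac{1}{2(q-2)n^{1/2}}\Bigr)
\end{equation*}
then rearranges immediately into the claimed inequality.

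I do not anticipate any serious technical obstacle: the partition property is immediate from the defining congruences, and all the genuine analytic content — the probabilistic bound on the run-length constrained subset of $\text{Irr}(n)$ — has already been carried out in the lemma. The single detail that warrants care is the factor of $2$ in the denominator of the target bound, which comes precisely from using the correct modulus $n/2$ for $w$ (the row $A_{2}(x)_{1}$ has length $n/2$, so the modulus stated as $n$ in Construction~\ref{qVT} must be halved when the code is applied to that row). Once this bookkeeping is handled consistently, the counting matches the stated bound exactly.
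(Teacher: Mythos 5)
Your proposal is correct and matches the paper's (unwritten but clearly intended) argument: the codes $C_{B(h,w,e,f,g)}(n)$ partition $S^{Irr}_{n}(3\log_{q}(n)+2)$, and pigeonhole over the $2q\cdot(n/2)\cdot(3\log_{q}(n)+4)\cdot q\cdot 2 = 2q^{2}n(3\log_{q}(n)+4)$ parameter classes combined with the preceding lemma gives the bound. Your observation that the effective modulus for $w$ is $n/2$ (since $C_{T(h,w)}$ acts on the length-$n/2$ row $A_{2}(x)_{1}$) is exactly the bookkeeping needed to reconcile the stated range $0\le w\le n-1$ with the factor $2$ rather than $4$ in the denominator.
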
 

\begin{remark}
 Taking the base-$q$ logarithm of the lower bound in Corollary \ref{C_{B}} yields the redundancy:
	$$r(C_B) \le \log_q n + \log_q(3\log_q n + 4) + O(1).$$
	This demonstrates that the redundancy of $C_B$ is strictly bounded by $O(\log_q n)$. Consequently, the asymptotic rate penalty $\frac{r(C_B)}{n}$ vanishes.
\end{remark}

Combining Construction \ref{Irr burst} and constraints in Construction \ref{correct a complementary symbol} and Construction \ref{qVT} together, we obtain the following final code construction, which can correct any number of complement insertions, 1-tandem duplications and up to one random insertion.
\begin{construction}\label{final construction}
Let $n$ be an even number, for $0\le a \le 2q-1$, $0\le b\le qn-1$, $0\le d \le n-1$, $0\le h \le 2q-1$, $0\le w \le n-1$, $0\le e\le 3\log_{q}(n)+3$, $0\le f\le q-1$ and $g\in \left\{0,1\right\}$, we construct the code

\begin{equation}
\begin{split}
	\mathcal{C}_{F(a,b,d,h,w,e,f,g)}(n) = & \bigg\{x\in \mathbb{Z}_{q}^{n} \mid x\in C_{(a,b)}(n) \cap C_{T(a,d)}(n) \cap C_{B(h,w,e,f,g)}(n)\bigg\} \\
	= & \bigg\{x\in C_{B(h,w,e,f,g)}(n) \mid \sum_{i=1}^{n}x_{i} \equiv a \pmod{2q}, \\
	& \quad \sum_{i=1}^{n}ix_{i} \equiv b \pmod{qn}, \\
	& \quad \sum_{i=1}^{n}(i-1)\beta_{i} \equiv d \pmod{n}\bigg\},
\end{split}
\end{equation}
where $\beta_{1}=1$ and for $2 \le i \le n$, \[
\beta_{i} = 
\begin{cases} 
	1 & \text{if } x_{i} \geq x_{i-1}, \\ 
	0 & \text{if } x_{i} < x_{i-1}. 
\end{cases}
\]
\end{construction}

\begin{theorem}\label{main thm}
Error-correcting codes $\mathcal{C}_{F(a,b,d,h,w,e,f,g)}(n)$ in Construction \ref{final construction} are able to correct any number of complement insertions, 1-tandem duplications and up to one random insertion. Furthermore, there exist one such code with size 
\begin{equation}
\left | \mathcal{C}_{F(a,b,d,h,w,e,f,g)}(n) \right | \ge  \frac{q(q-2)^{n-1}}{4\left(3\log_{q}(n)+4\right)q^{3}n^{3}} \left(1-\frac{1}{2(q-2)n^{1/2}}\right)
\end{equation}
\end{theorem}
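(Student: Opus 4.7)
The plan is a two-stage decoding argument combined with a pigeonhole counting bound. Given a received word $x''\in D_{ct}^{*(\le 1)}(x)$ with $x\in \mathcal{C}_{F(a,b,d,h,w,e,f,g)}(n)$, I would first extract the signature $\sigma(x'')$, which is a deterministic function of $x''$. By Proposition~\ref{do not alter} complement insertions and $1$-tandem duplications leave the signature invariant, so the only way $\sigma(x'')$ can differ from $\sigma(x)=x$ is through the at most one random insertion. Lemma~\ref{a noisy alter} then restricts this distortion to one of four mutually exclusive cases indexed by $|\sigma(x'')|\in\{n,n+1,n+2\}$: no change or a single complementary substitution (length $n$), a single point insertion (length $n+1$), or a burst insertion of length $2$ (length $n+2$). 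Hence the length of the extracted signature already classifies the error pattern for the decoder.

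In each case I would invoke the appropriate constraint from Construction~\ref{final construction} to recover $x$ from $\sigma(x'')$. If $|\sigma(x'')|=n+2$, the constraints built into $C_{B(h,w,e,f,g)}(n)$ correct the burst-$2$ insertion as in Construction~\ref{Irr burst}. If $|\sigma(x'')|=n+1$, the modified $q$-ary VT constraints of $C_{T(a,d)}(n)$ from Construction~\ref{qVT} correct the point insertion. If $|\sigma(x'')|=n$, I would first test whether $\sigma(x'')$ satisfies the two $C_{(a,b)}(n)$ constraints; if yes, the signature was undisturbed and $x=\sigma(x'')$, otherwise the decoder of Construction~\ref{correct a complementary symbol} recovers $x$. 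Unambiguity of this last step uses that, since $q$ is even, the deviation $q-1-2w\pmod{2q}$ induced by a genuine complementary substitution is odd and hence nonzero, so such a substitution cannot preserve the first constraint.

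For the size lower bound, the starting point is Corollary~\ref{C_{B}}, which already selects parameters $(h,w,e,f,g)$ for which
\[
|C_{B(h,w,e,f,g)}(n)| \;\ge\; \frac{q(q-2)^{n-1}}{2(3\log_{q}(n)+4)q^{2}n}\Bigl(1-\frac{1}{2(q-2)n^{1/2}}\Bigr).
\]
The three further modular equations $\sum x_i\equiv a\pmod{2q}$, $\sum ix_i\equiv b\pmod{qn}$, and $\sum(i-1)\beta_i\equiv d\pmod{n}$ partition this set into at most $2q\cdot qn\cdot n$ equivalence classes, so a standard pigeonhole step produces a triple $(a,b,d)$ whose class meets the lower bound claimed in the theorem.

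The main obstacle I expect is the bookkeeping needed to ensure that the four sub-constraints interlock without interference: one must verify that applying the $C_{B}$-decoder to $\sigma(x'')$ in the burst case returns an irreducible sequence (necessarily $x$), that the case split is driven entirely by $|\sigma(x'')|$, and that the $C_{(a,b)}$-decoder always distinguishes a genuine complementary substitution from the no-error subcase. Once that case analysis is executed cleanly, both correctness and the counting bound follow from the individual constructions already established.
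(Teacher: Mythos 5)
Your proposal is correct and follows essentially the same route as the paper: correctness via the signature case analysis of Lemma~\ref{a noisy alter} together with the three embedded sub-decoders (split by $|\sigma(x'')|\in\{n,n+1,n+2\}$), and the size bound via pigeonhole over $(a,b,d)$ applied to Corollary~\ref{C_{B}}. One bookkeeping caveat: pigeonholing over all $2q\cdot qn\cdot n$ triples yields a denominator $4\left(3\log_{q}(n)+4\right)q^{4}n^{3}$, a factor of $q$ short of the stated bound; to recover $q^{3}$ one must observe that the row constraints already inside $C_{B(h,w,e,f,g)}(n)$ determine $\sum_{i}x_{i}\bmod q$, so only $2$ of the $2q$ values of $a$ are admissible --- a refinement the paper's own one-line pigeonhole step also leaves implicit.
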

\begin{proof}
By the previous discussions, since signatures can be uniquely recovered, the decoder can correct any number of complement insertions, 1-tandem duplications, and up to one random insertion. Let $C_{B(h,w,e,f,g)}(n)$ be the code in Corollary \ref{C_{B}}. Then 
\begin{align}
	\left |C_{B\left(h,w,e,f,g\right)}(n) \right | \ge \frac{q(q-2)^{n-1}}{2\left(3\log_{q}(n)+4\right)q^{2}n}\left(1-\frac{1}{2(q-2)n^{1/2}}\right).
\end{align}
Further, the set $C_{B(h,w,e,f,g)}(n)$ is partitioned by the disjoint union of code families:
	$$C_{B(h,w,e,f,g)}(n)=\bigcup_{(a,b,d)\in\mathbb{Z}_{2q}\times \mathbb{Z}_{qn}\times \mathbb{Z}_{n}}\mathcal{C}_{F(a,b,d,h,w,e,f,g)}(n).$$ Then the lower bound is obtained using the pigeonhole principle.
\end{proof}
\begin{remark}
	To explicitly verify the asymptotic optimality of our construction, we evaluate the rate of the codes $\mathcal{C}_{F(a,b,d,h,w,e,f,g)}(n)$ utilizing the lower bound established in Theorem~\ref{main thm}. The code rate is defined as $R = \frac{1}{n}\log_q|\mathcal{C}_F|$. Taking the base-$q$ logarithm of the lower bound yields:
	\begin{align*}
		\log_q|\mathcal{C}_F| &\ge \log_q \left( (q-2)^{n-1} \right) - \log_q \left( n^3 (3\log_q n + 4) \right) + O(1) \\
		&= (n-1)\log_q(q-2) - 3\log_q n - \log_q(3\log_q n + 4) + O(1),
	\end{align*}
	where the $O(1)$ term encompasses all constant factors and $o(1)$ terms independent of the asymptotic growth of $n$. Dividing by the code length $n$ and taking the limit as $n \to \infty$, the logarithmic penalty terms $\frac{\log_q n}{n}$ and $\frac{\log_q(\log_q n)}{n}$ asymptotically vanish. Consequently, the achievable rate approaches:
	$$\lim_{n\to\infty} \frac{1}{n}\log_q|\mathcal{C}_F| \ge \log_q(q-2).$$
	
	Recall that $\mathsf{cap}^{\text{exact}}_{q} = \log_{q}(q-2)$ is precisely the theoretical capacity of the idealized exact complement insertion channel \cite{ys2}. Since a noisy channel incorporating additional random insertions cannot exceed the capacity of its noise-free counterpart, our achievable rate perfectly matches the theoretical upper bound. This direct calculation formally proves that equipping the code to correct additional random biological noise incurs an asymptotically zero rate penalty. Thus, the proposed codes are asymptotically optimal, establishing that $\mathsf{cap}^{\text{noisy}}_{q} = \log_q(q-2)$.
\end{remark}
Since our codes are also special cases of 1-tandem-duplication-correcting codes, we compare them with known tandem-duplication-correcting codes as well (see \cite{jf, fs, ze1, ze2, k, lw, gp, ys, ty, tf, tw,jf2,fs2, tw,tang noisy, tang isit}). The coding capacity of the channel allowing any number of 1-tandem duplications is $\log_{q}(q-1)$~\cite{jf,ys2}, which is slightly larger than our asymptotic code rate $\log_{q}(q-2)$. The gap is a consequence of introducing complement insertions into the noisy insertion channel.
\subsection{Decoding Algorithm and Complexity}
\label{sec:decoding}

In this subsection, we present an explicit decoding algorithm for the proposed codes $\mathcal{C}_{F(a,b,d,h,w,e,f,g)}(n)$ and establish its time complexity. 

Let $y \in D_{ct}^{*(\le 1)}(x)$ be the received sequence corresponding to the transmitted codeword $x \in \mathcal{C}_{F(a,b,d,h,w,e,f,g)}(n)$. By Lemma \ref{a noisy alter}, the random insertion alters the signature $\sigma(x)$ into $r = \sigma(y)$ through exactly one of three errors: a point insertion, a point substitution by a complementary symbol, or a burst insertion of length 2. Since the codebook is restricted to $\mathrm{Irr}(n)$, recovering the signature $\sigma(x)$ uniquely determines the codeword $x$. The decoding procedure is outlined in Algorithm \ref{alg:decoding}.

\begin{algorithm}[htbp]
	\caption{Decoding for $\mathcal{C}_{F(a,b,d,h,w,e,f,g)}(n)$}
	\label{alg:decoding}
	\begin{algorithmic}[1]
		\Require Received sequence $y \in D_{ct}^{*(\le 1)}(x)$
		\Ensure Transmitted codeword $x \in \mathcal{C}_{F(a,b,d,h,w,e,f,g)}(n)$
		\State Compute $r = \sigma(y)$ by collapsing consecutive identical and complementary symbols in $y$.
		\If{$|r| = n+1$}
		\State
		\State Decode $r$ using the modified VT decoder for $C_{T(a,d)}(n)$ to obtain $x$.
		\ElsIf{$|r| = n+2$}
		\State
		\State Form the $2 \times \frac{n+2}{2}$ array $A_2(r)$.
		\State Decode $A_2(r)$ using the SVT decoder for $C_{B(h,w,e,f,g)}(n)$ to obtain $x$.
		\ElsIf{$|r| = n$}
		\State Compute syndromes $S_0 = \sum_{j=1}^{n} r_j \pmod{2q}$ and $S_1 = \sum_{j=1}^{n} j r_j \pmod{qn}$.
		\If{$S_0 \equiv a \pmod{2q}$ \textbf{and} $S_1 \equiv b \pmod{qn}$}
		\State $x \leftarrow r$ 
		\Else
		\State 
		\State Determine the error value $v$ and position $i$ satisfying $q - 1 - 2v \equiv S_0 - a \pmod{2q}$ and $i(q - 1 - 2v) \equiv S_1 - b \pmod{qn}$.
		\State Set $x \leftarrow r$, and update $x_i \leftarrow q - 1 - v$.
		\EndIf
		\EndIf
		\State \Return $x$
	\end{algorithmic}
\end{algorithm}

\textbf{Complexity Analysis.} Let $N = |y|$ denote the length of the received sequence. Because the noisy insertion channel permits an arbitrary number of complement insertions and 1-tandem duplications, $N$ can be significantly larger than the codeword length $n$. Step 1 computes the signature $r = \sigma(y)$ via a single sequential scan of $y$, which takes $O(N)$ time. This operation effectively compresses the potentially unbounded sequence $y$ into a signature $r$ whose length is strictly bounded by $n+2$. The complexity of the subsequent algebraic recovery depends on the length of $r$. For $|r| \in \{n+1, n+2\}$, applying either the modified VT decoder or the SVT decoder takes $O(n)$ time. For $|r| = n$, evaluating the constraint syndromes $S_0$ and $S_1$ requires $O(n)$ arithmetic operations, and solving the resulting modular equations takes $O(1)$ time. In all cases, the algebraic decoding phase is strictly bounded by $O(n)$. Therefore, the overall time complexity of the algorithm is $O(N) + O(n) = O(N)$. This confirms that the decoding procedure operates in strictly linear time with respect to the length of the received sequence.


\section{Conclusion}\label{conclusion}
This paper established a coding framework for error correction in noisy DNA storage channels characterized by three dominant error types: arbitrary complement insertions, arbitrary 1-tandem duplications, and up to one random insertion. We determined the coding capacity of this channel to be $\mathrm{cap}_q^{\mathrm{noisy}} = \log_q(q-2)$ for alphabet size $q \geq 4$, demonstrating that correcting the additional random insertion error incurs no asymptotic rate penalty compared to channels limited to complement insertions alone. Through a code construction combining signature-preservation techniques with constrained coding frameworks, we developed asymptotically optimal codes that achieved this capacity.\par 
Future research directions include extending the model to multiple random insertions and designing codes for noisy channels in which complement insertions, tandem duplications of longer length and random insertions may occur. We propose the following two open problems for future investigation.
\begin{open problem}
Construct error-correcting codes for noisy insertion channel where arbitrary complement insertions, arbitrary 1-tandem duplications and up to $p$ random insertions may occur. Furthermore, we conjecture that the coding capacity of the channel remains $\log_{q}(q-2)$.
\end{open problem}
\begin{open problem}
Construct error-correcting codes for noisy insertion channel where arbitrary complement insertions, arbitrary tandem duplications of length at most 3 and up to $p$ random insertions may happen. 
\end{open problem}
The reader is invited to attack the above open problems.

	\section*{Acknowledgments}
The authors would like to thank Dr. Vidya Sagar for his suggestion and discussion.
	
\end{document}